\newtheorem{theorem}{Theorem}[section]
\newtheorem{lemma}[theorem]{Lemma}
\newcommand\PG[1]{{\color{black}#1}}
\newcommand{\kb}[1]{{\color{black} \text{#1}}}
\begin{document}
\title{A mean-field game model for homogeneous flocking }
\author{Piyush Grover}
 \affiliation{ 
Mitsubishi Electric Research Labs, Cambridge, MA 02139, USA}
\email{grover@merl.com}
\author{Kaivalya Bakshi}%
 \affiliation{ 
Mitsubishi Electric Research Labs, Cambridge, MA 02139, USA}
 \affiliation{ 
Aerospace Engineering, Georgia Tech, Atlanta, GA 30332, USA}
 \author{Evangelos A. Theodorou}
 \affiliation{ 
Aerospace Engineering, Georgia Tech, Atlanta, GA 30332, USA}

\date{\today}

\begin{abstract}
Empirically derived continuum models of collective behavior among large populations of dynamic agents are a subject of intense study in several fields, including biology, engineering and finance. We formulate and study a mean-field game model whose behavior mimics an empirically derived nonlocal homogeneous flocking model for agents with gradient self-propulsion dynamics. The mean-field game framework provides a \emph{non-cooperative optimal control} description of the behavior of a population of agents in a distributed setting. In this description, each agent's state is driven by optimally controlled dynamics that result in a Nash equilibrium between itself and the population. The optimal control is computed by minimizing a cost that depends only on its own state, and a mean-field term. The agent distribution in phase space evolves under the optimal feedback control policy. We exploit the low-rank perturbative nature of the nonlocal term in the forward-backward system of equations governing the state and control distributions, and provide a closed-loop linear stability analysis demonstrating that our model exhibits bifurcations similar to those found in the empirical model. The present work is a step towards developing a set of tools for systematic analysis, and eventually \emph{design}, of collective behavior of non-cooperative dynamic agents via an inverse modeling approach. \end{abstract}

\maketitle
\begin{quotation}
\PG{While the analysis of emergent behavior in a large population of dynamic agents is a classical topic, the design of desired macroscopic behavior in such systems is a grand engineering challenge.} Such systems are often studied using continuum models, involving empirically derived systems of nonlinear partial differential equations that govern the distribution of agents in the phase space. The various terms in these equations represent intrinsic dynamics of the agents, mutual attraction and/or repulsion, and noise. An important class of such models concern flocking, both in nature, and engineering applications such as bio-inspired control of multi-agent robotics, traffic modeling, power-grid synchronization etc. We take a mean-field game approach to derive a control system that mimics the behavior of one such class of models in the setting of non-cooperative agents. A mean-field game is a coupled system of partial differential equations that govern the state and optimal control distributions of a representative agent in a Nash equilibrium with the population. Using a linear stability analysis, we recover phase transitions that have been observed in the corresponding empirical model, as well as find some new ones, as the control penalty is changed.
\end{quotation}
\section{Introduction}
Continuum models of large populations of interacting dynamic agents are popular in mathematical biology\cite{resat2009kinetic}, and also have been employed in numerous applications such as multi-agent robotics \cite{beni2004swarm}, finance \cite{bonabeau2002agent} and traffic modeling \cite{whitham1955kinematic}. The aim of such models is to accurately represent the macroscopic dynamics of the population, and its dependence on parameters. Typically, such models are derived by starting with an empirical dynamical system for a representative agent. This system typically involves the intrinsic dynamics of the agent, a coupling function\cite{stankovski2017coupling} describing its interaction with the population, and noise. From this single agent dynamical system, a continuum description is obtained by deriving a macroscopic equation for the distribution of agents in the phase space. We call this class of models \emph{uncontrolled}.

An alternative way of deriving continuum models of collective behavior is \kb{via} a corresponding variational principle. In this approach, the dynamical system for a representative agent includes its intrinsic dynamics, a control term and noise. The unknown control term is obtained as a solution to an optimization problem. Within this variational (or optimization) framework for large populations, there are multiple classes of modeling strategies \cite{nourian2013nash}. If one takes a centralized global optimization viewpoint, the corresponding problem is that of \emph{mean-field control}, i.e. it is assumed that each agent is being controlled by a central entity whose goal is to optimize a macroscopic cost function\cite{elamvazhuthi2016optimal} that includes interaction among the population. In a distributed setting, there is no central entity, and the agents can either be \emph{cooperative} or \emph{non-cooperative}. In the former case, each agent choses its control to optimize a global sum of cost functions of the population. 

On the other hand, in the \emph{non-cooperative} mean field setting that we are interested in, each agent optimizes only its \kb{individual} cost function. This cost function involves coupling with the population solely via a mean-field term. This is the setting of mean-field games (MFG)\cite{caines2015mean,lasry2007mean,huang2007large}. In this setting, a Hamilton-Jacobi-Bellman (HJB) equation (posed backward in time) characterizes the optimal feedback control for a representative agent under the assumption that the (cost) coupling function depends only on its own state, and possibly time. A Fokker-Planck (FP) equation governs the evolution of agent density in phase space. A consistency principle \cite{huang2007large} requires that the coupling function used in the agent HJB equation is reproduced as its own average over the continuum of agents. Under fairly general conditions, solutions to MFG model can be shown to possess $\epsilon$-Nash property, i.e., unilateral benefit of any deviation from the computed control policy by a single agent vanishes rapidly as the population becomes large.

\begin{figure}[t]
\includegraphics[width=2.65in,height=2.2in]{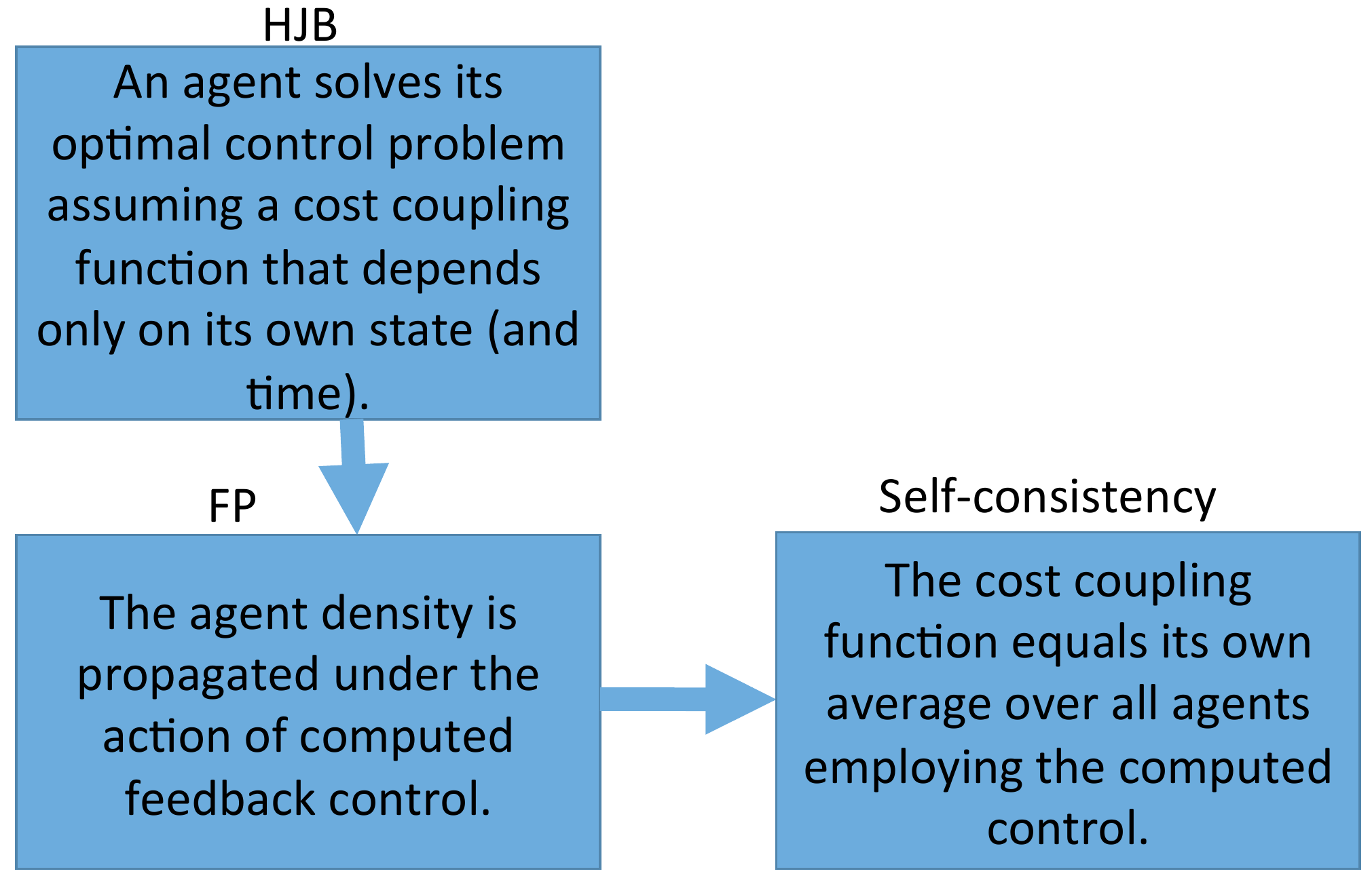}
\caption{The MFG framework}
 \label{fig:mfg}
 \end{figure}

The classical (uncontrolled) Cucker-Smale (CS) flocking model\cite{cucker2007emergent} \kb{describes} a system of finite population of coupled agents with trivial intrinsic dynamics, moving solely under the influence of an alignment force, and noise. This was followed by several continuum descriptions\cite{ha2009simple, choi2017emergent}, and was recently generalized to a continuum model with self-propulsion effects in the homogeneous case \cite{barbaro2016phase} (i.e., assuming spatial homogeneity). This latter generalization results in existence of non-zero mean velocity distribution resulting from symmetry breaking, a wide range of `disordered' states consisting of multiple flocks, and other phase transitions.

A MFG model for a continuum of coupled Kuramoto oscillators\cite{kuramoto1975self} was described in a seminal work \cite{yin2012synchronization} that influences the development in the current paper. Building upon this work, a MFG model for the classical inhomogeneous CS was then proposed\cite{nourian2011mean}; the stability analysis was partially addressed. This was followed by a homogeneous flocking MFG model for coupled agents with trivial intrinsic dynamics, along with linear and nonlinear stability analysis\cite{nourian2014mean}.  Also of interest is an approach \cite{degond2014large} where agents apply a gradient descent rather than solve an HJB equation, since the Nash equilibria of the MFG are recovered under certain conditions using this approach. 

\PG{The contributions of this paper are as follows. We formulate a MFG model for homogeneous flocking of agents driven by self-propulsion and noise. In contrast to the earlier work on homogeneous MFG model with trivial intrinsic dynamics \cite{nourian2014mean}, this model exhibits phase transitions (bifurcations) that mimic those present in the corresponding uncontrolled model \cite{barbaro2016phase}. We generalize the stability analysis developed in previous MFG models \cite{yin2012synchronization, nourian2014mean,huang2007large,gueant2009reference} to agents with gradient nonlinear dynamics, and employ a method used to study reaction-diffusion equations\cite{anastasio2017geometric} to derive a semi-analytical stability criterion. Besides qualitatively explaining the phase transition phenomena, quantitative results useful in control design are obtained from the numerical analysis. Decreasing the control control penalty below a threshold causes the zero mean velocity steady state of the MFG model to lose stability via pitchfork bifurcation \cite{Wiggins1990}. This results in a pair of stable steady states with non-zero mean velocity. If the control is made even cheaper, a new stable regime (nonexistent in the uncontrolled model) emerges for zero mean velocity steady states in the small noise case via a subcritical pitchfork bifurcation.

}

\section{Uncontrolled formulation}\label{sec:uncontrolled}
\vspace{-0.35cm}
We \kb{briefly review} here the uncontrolled formulation from Ref. \onlinecite{barbaro2016phase} \kb{which} provides a homogeneous model for CS flocking with self-propulsion. Consider a population of $N$ agents moving in phase space \kb{($(q,p)\in\mathbb{R}^2$)}, where each agent is acted upon by a gradient self-propulsion term, a CS coupling force with localization kernel $K$ in position space that aligns the agents' velocity with the neighbors, and noise. The \kb{dynamics} for $i$th agent are
\begin{eqnarray*}
dq_i=p_idt,\\
dp_i=a(p_i)dt+F(q_i,p_i,q_{-i},p_{-i})dt+\sigma d\omega_i,\label{eq:unc_v}
\end{eqnarray*}
where $a(p_i)=-\partial_pU(p_i)$, \kb{$U(p_i)=\alpha(\frac{p_i^4}{4}-\frac{p_i^2}{2})$}, $F(q_i,p_i,q_{-i},p_{-i})=\dfrac{1}{N}\dfrac{\sum_{j=1}^{N}  K(q_i,q_j)(p_j-p_i)}{\sum_{j=1}^{N} K(q_i,q_j)},$ $\sigma>0$ is the noise intensity, $\alpha>0$ defines the strength of the self-propulsion term, $K(q,q') = K(q',q) \geq 0,$ and  $K(q,q) = 1$, $q_{-i}=\{q_1,\dots ,q_{i-1}, q_{i+1}\dots\}$,  $p_{-i}=\{p_1,\dots ,p_{i-1}, p_{i+1}\dots\}$.

In the continuum limit ($N\rightarrow\infty$), the agent density $f(q,p,t)$ in phase space is \kb{governed by}
\begin{eqnarray*}\partial_tf+\partial_q(pf)+\partial_p(a(p)f+F[f]f)=\dfrac{\sigma^2}{2}\partial_{pp}f,\end{eqnarray*}
where \kb{$F[f](q,p,t)=(\bar{p}-p)$  and,} 
\begin{eqnarray*}\kb{$\bar{p}(q,t)=\dfrac{\int\int K(q,q')\:p\:f(q',p,t)dq'dp}{\int\int K(q,q')f(q',p,t)dq'dp}.$}\end{eqnarray*}
\PG{We denote the action of the operator $F$ on a function $f$ by $F[f](.)$.}
%
%
From here onwards, we consider the homogeneous case by dropping dependence on $q$, and use $x$ to denote the velocity $p$.
The uncontrolled \kb{dynamics} for the velocity of agent $i$ are
\vspace{-0.35cm}
\begin{flalign}
dx_i=a(x_i)dt+\dfrac{1}{N} \sum_{j=1}^{N}  (x_j-x_i)dt+\sigma d\omega_i,\label{eq:unc_x}
\end{flalign} 
\kb{with} corresponding density evolution
\vspace{-0.35cm}
\begin{equation}
\partial_tf=\partial_x(\alpha(x^2-1)xf+(x-\bar{x})f)+\frac{\sigma^2}{2}\partial_{xx}f, \label{eq:kinetichom}
\end{equation}
where $\bar{x}(t)=\dfrac{\int xf(x,t)dx}{\int f(x,t)dx}$.

%
\subsection{Fixed Points and Stability Analysis}\label{sec:stab_fp}
It is known \cite{tugaut2014phase,barbaro2016phase} that fixed points of Eq. (\ref{eq:kinetichom}) are \kb{given by}
\begin{flalign}
&f_{\infty}(x;\mu)=\frac{1}{Z}\exp\left(\frac{-2}{\sigma^2}\left[\alpha \frac{x^4}{4}+(1-\alpha)\frac{x^2}{2}-\mu x\right]\right), \label{eq:solhom}
\end{flalign}
where $\mu \in \mathbb{R}$ is the mean of the distribution, and $Z$ is the normalization \kb{factor}. For all positive values of parameters \kb{$(\sigma,\alpha)$}, the zero mean velocity solution $f_{\infty}(\cdot,0)$ always exists. For a range of parameters, two additional stable non-zero mean velocity solutions are created via a supercritical bifurcation, resulting in loss of stability of the zero mean solution. In Ref. \onlinecite{barbaro2016phase}, these stability properties were inferred numerically \kb{by} a Monte-Carlo approach.

We take a different approach, and consider the spectral stability of steady state solutions of Eq. (\ref{eq:kinetichom}).  In addition to gaining additional insight into the properties of the uncontrolled system, this also sets the stage for stability analysis of the MFG system in the next section.
 We consider perturbations of the form $f(x,t)=f_{\infty}(x)(1+\epsilon\tilde f(x,t))$. Then, the linearization of Eq. (\ref{eq:kinetichom}) is 
\begin{eqnarray*}
\partial_t\tilde f(x,t)=L[\tilde f](x,t)=L_{loc}[\tilde f](x,t)+L_{nonloc}[\tilde f](x,t),
\end{eqnarray*}
where, $\hat U(x)=U(x)+x^2/2-\mu x$, 
\begin{eqnarray*}
L_{loc}[\tilde f](x,t)=-\partial_x\hat U(x)\partial_x\tilde f(x,t)+(\sigma^2/2)\partial_{xx}\tilde f(x,t)
\end{eqnarray*}
is a local linear operator, and 
\begin{eqnarray*}
L_{nonloc}[\tilde f](x,t)=\frac{2}{\sigma^2}\partial_x\hat U(x)\int y \tilde f(y,t)f_{\infty}(y)dy
\end{eqnarray*}
is a \emph{nonlocal} linear operator. \PG{An operator $O$ is called nonlocal if $O[f](x_1)$ depends on $f(x_2)$ (or the derivatives $\partial_x f(x_2), \partial_{xx} f(x_2)$) for some $x_2\neq x_1$, and local otherwise.}
 Let $q(x)\equiv\frac{2}{\sigma^2}\partial_x\hat U(x)$. Then, $\partial_xf_{\infty}(x)=-q(x)f_{\infty}(x)$.
We define a Hilbert space $\mathbb{H}=L^2(\mathbb{R},f_{\infty}dx)$\PG{, i.e., the $f_\infty$-weighted inner-product space of square-integrable functions on the real line}. Then we can write a general form of the full linearized operator as
\begin{align}
L[\tilde f](x,t)=L_{loc}[\tilde f](x,t)+s_1(x)\langle g_1(\cdot),\tilde f(\cdot,t)\rangle\label{eq:Lgen},
\end{align}
where $s_1(x)=q(x), g_1(x)=x$ for our case, and the inner product is understood to be $\langle\cdot,\cdot\rangle_\mathbb{H}$.
We note that $L_{loc}$ is a self-adjoint operator\cite{pavliotis2014stochastic} on $\mathbb{H}$ which has a non-positive discrete real spectrum of the form $0=\lambda_1>\lambda_2>\lambda_3\dots$. It has a complete set of orthogonal eigenfunctions $\{\xi_i(x)\}_{i\in\mathbb{N}}$. The first eigenfunction $\xi_1$, spanning the kernel of $L_{loc}$, is a constant function. Following the approach presented in Refs. \onlinecite{anastasio2017geometric,frettas1995stability} for nonlocal eigenvalue problems in reaction-diffusion equations \PG{(also see Ref. \onlinecite{kapitula2012stability})}, we consider the following eigenvalue problem   
\begin{flalign}
\lambda w=L_{loc}w&+s_1(x)\langle g_1,w\rangle\implies\nonumber\\
&0=(L_{loc}-\lambda I)w+s_1(x)\langle g_1,w\rangle.&\label{eq:eval}
\end{flalign}
Note that an eigenfunction $w$ of $L$ satisfying $\langle w,g_1\rangle=0$ is also an eigenfunction of $L_{loc}$, i.e. $w=v_i$ for some $i$ with eigenvalue $\lambda=\lambda_i$. We search for   eigenfunctions such that $\langle w,g_1\rangle$ is nonzero. The corresponding eigenvalues are called `moving' eigenvalues in Ref. \onlinecite{frettas1995stability}. Multiplying both sides of Eq. (\ref{eq:eval}) with the resolvent $R_{\lambda}=(L_{loc}-\lambda I)^{-1}$,
\begin{eqnarray}
0=w+R_{\lambda}s_1(x)\langle g_1,w\rangle.\nonumber
\end{eqnarray}
\PG{Taking the inner product of the above equation with $g_1$,}
\begin{eqnarray}
0=\langle g_1,w\rangle+\langle R_{\lambda}s_1(x),g_1\rangle\langle g_1,w\rangle.\label{eq:eval1}
\end{eqnarray}

\kb{For} an arbitrary function $z(x), R_{\lambda}z=\sum_{i=1}^{\infty}\dfrac{\langle \xi_i,z\rangle}{\lambda_i-\lambda}\xi_i$. Evaluating the inner product in Eqs. \ref{eq:eval1},
\begin{equation}\langle R_{\lambda}s_1(x),g_1\rangle=\langle R_{\lambda}q(x),x\rangle=\sum_{i=2}^{\infty}\dfrac{\langle \xi_i,q(x)\rangle}{\lambda_i-\lambda}\langle \xi_i,x\rangle.\end{equation}
Using this result in Eq. (\ref{eq:eval1}),
\begin{eqnarray}
\langle w,x\rangle (1+\sum_{i=2}^{\infty}\dfrac{\langle \xi_i,q(x)\rangle}{\lambda_i-\lambda}\langle \xi_i,x\rangle)=0. \nonumber
\end{eqnarray}
Hence, either $\langle w,x\rangle=0$, or $1+\sum_{i=2}^{\infty}\dfrac{\langle \xi_i,q(x)\rangle}{\lambda_i-\lambda}\langle \xi_i,x\rangle=0$. But we are looking for moving eigenvalues, i.e. $w$ s.t. $\langle w,x\rangle\neq 0$, hence the eigenvalue equation reduces to:
\begin{eqnarray} 
h(\lambda)\equiv 1+\sum_{i=2}^{\infty}\dfrac{\langle \xi_i,q(x)\rangle}{\lambda_i-\lambda}\langle \xi_i,x\rangle=0.\label{eq:eval3}
\end{eqnarray}

A sufficient condition for Eq. (\ref{eq:eval3}) to have only real roots is that the function $h(\lambda)$ is Herglotz, or equivalently, the product $\langle \xi_i,q(x)\rangle\langle \xi_i,x\rangle$ has \kb{the} same sign for all $i$. Using integration by parts on eigenvalue equation for $L_{loc}$, one can show that  $\langle \xi_i,x\rangle=-\dfrac{\sigma^2}{2\lambda_i}\langle \xi_i,q(x)\rangle$. Thus the Herglotz condition is satisfied since $\lambda_i< 0$ for all $i>1$.

\begin{figure*}[t]
\subfloat[]{\includegraphics[width=1.9in,height=2in]{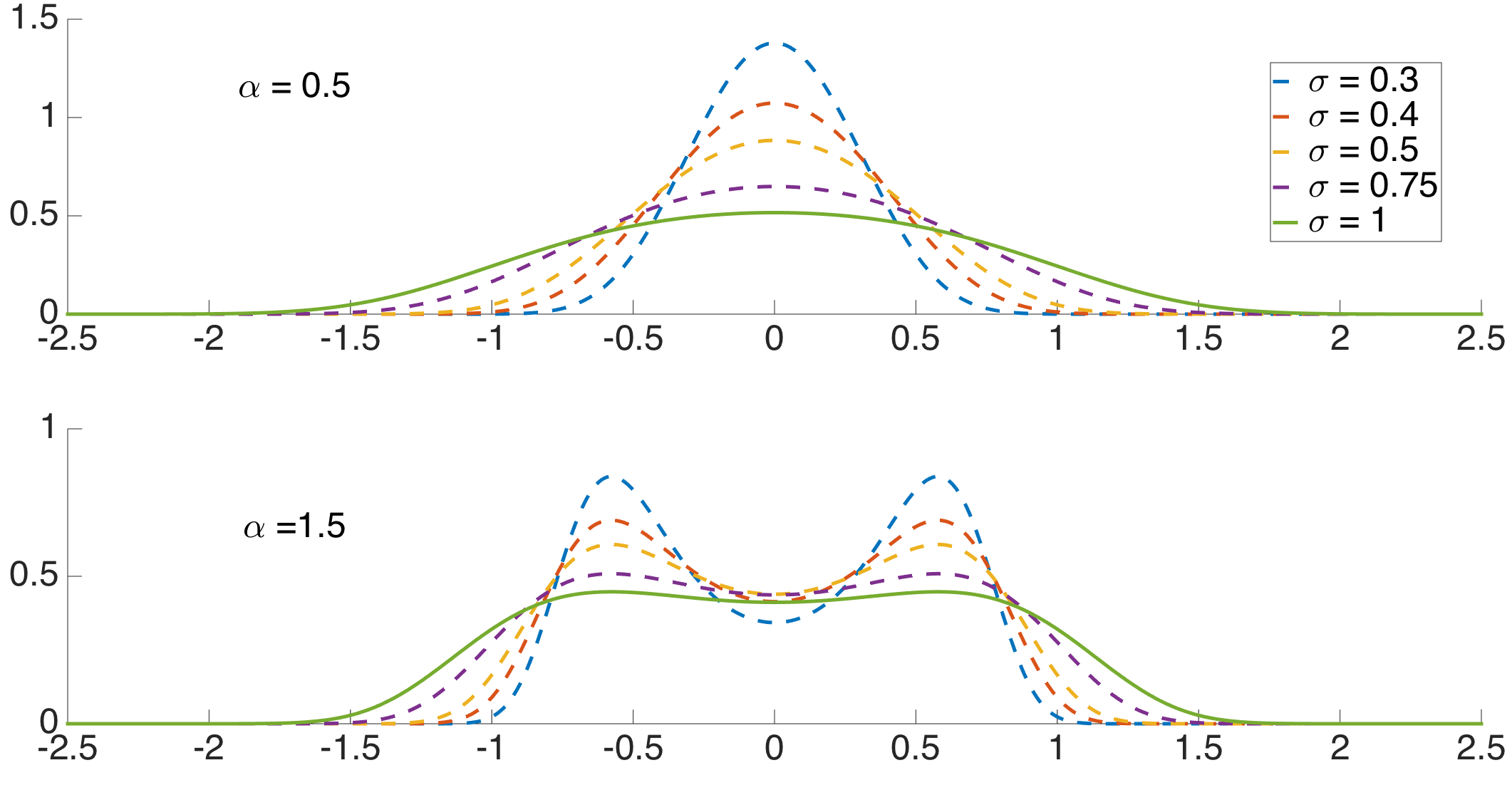}	}
\subfloat[]{\includegraphics[width=1.3in,height=1in]{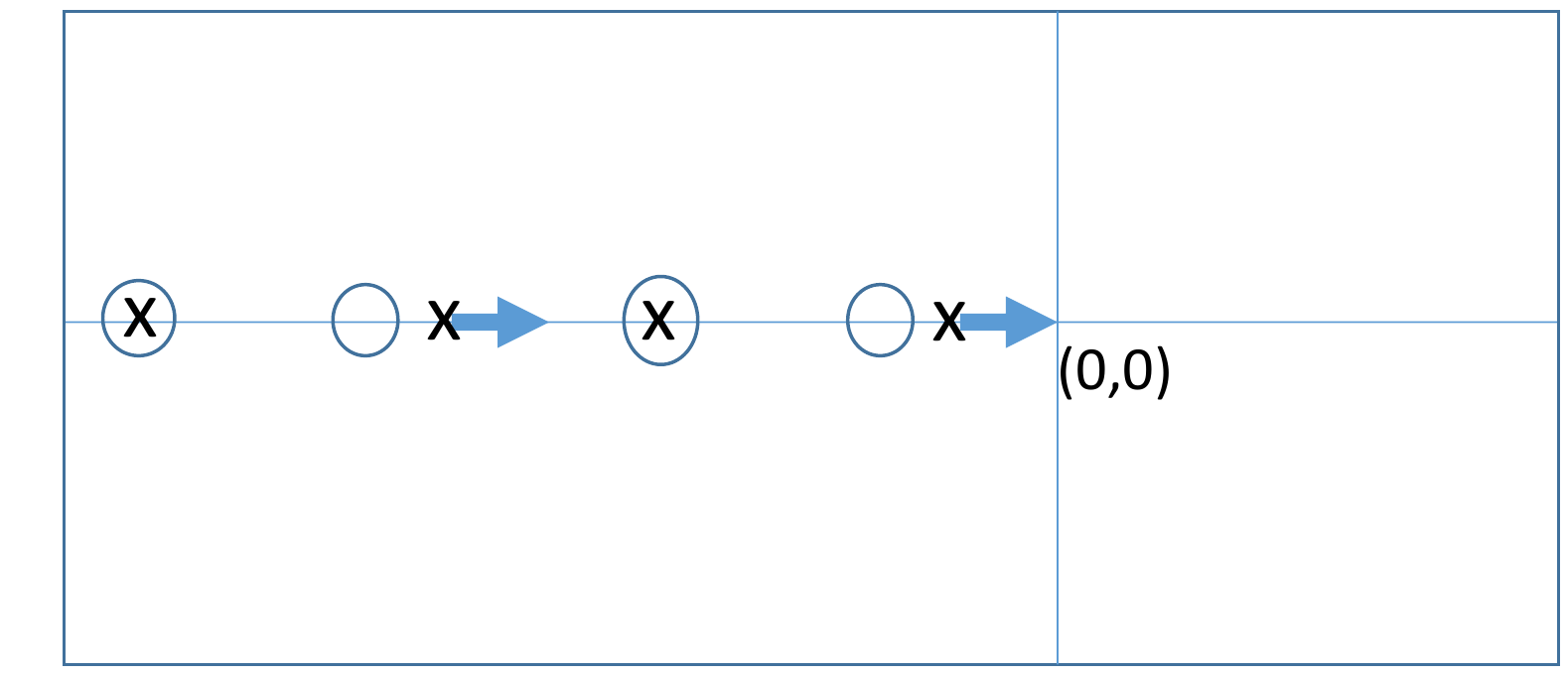}}
\subfloat[]{\includegraphics[width=1.9in,height=2in]{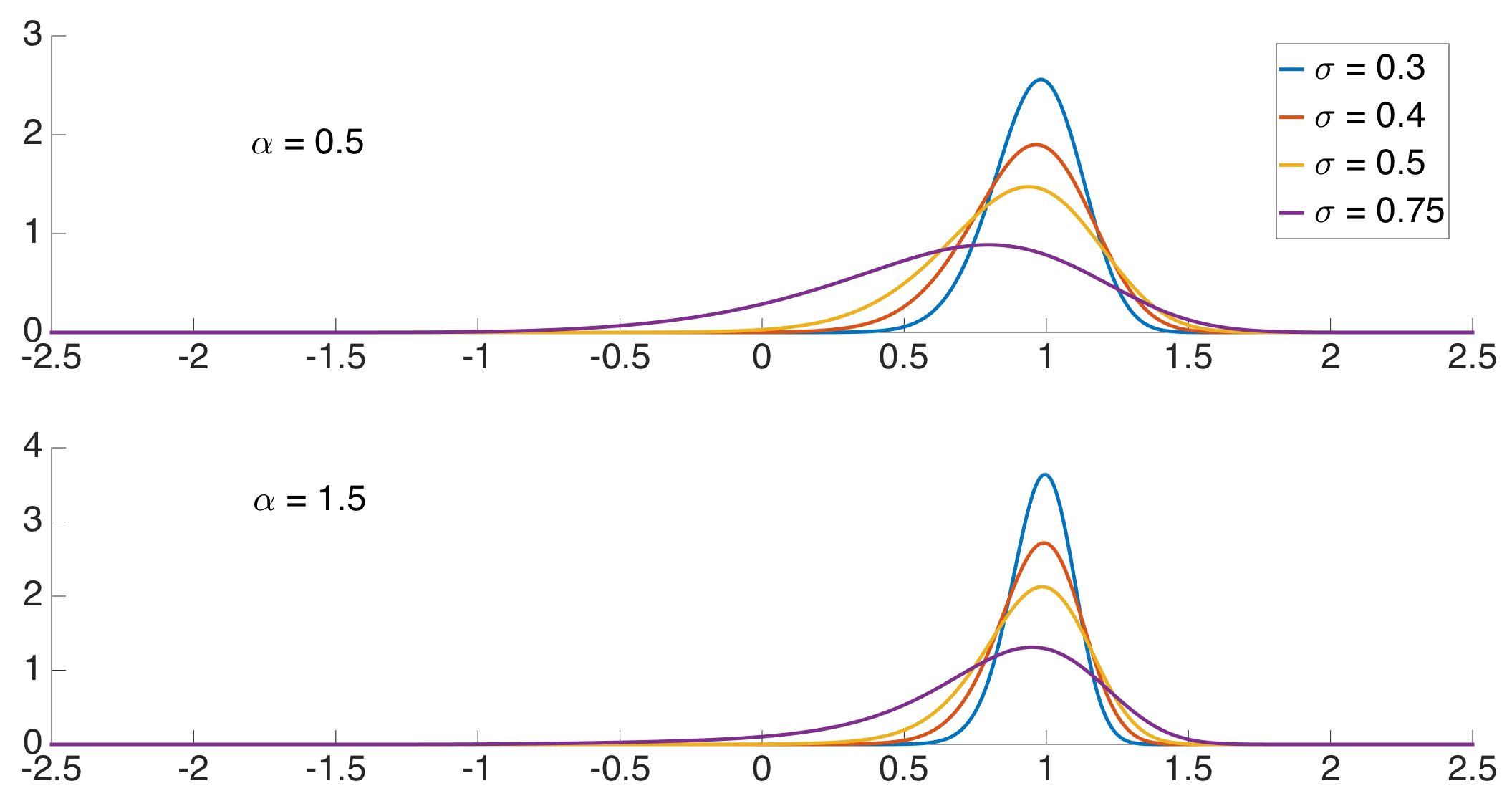}}
\subfloat[]{\includegraphics[width=2.3in,height=2in]{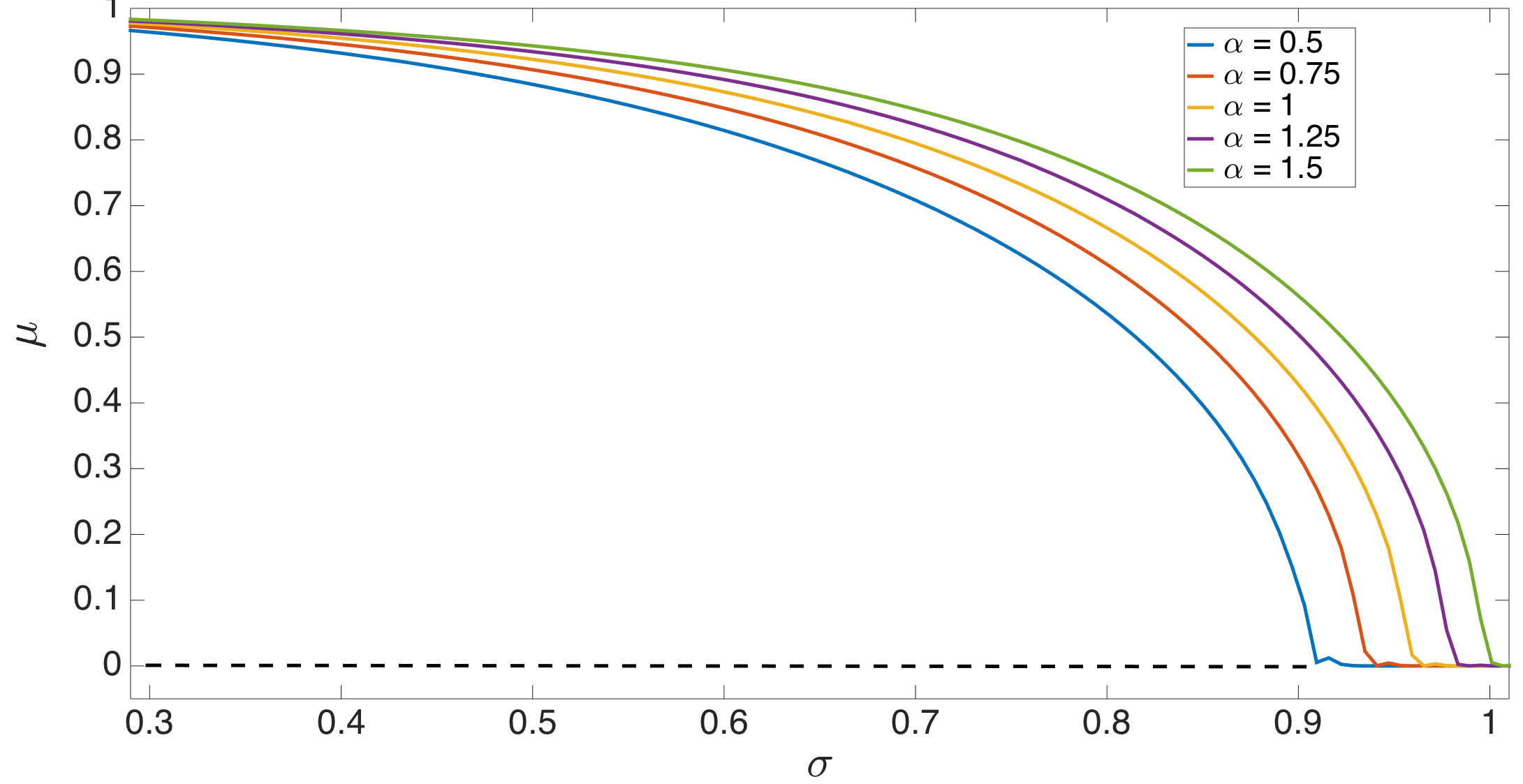}}
 \caption{\footnotesize Uncontrolled system. {a) Stable (solid) and unstable (dashed) zero mean steady state solutions for $\alpha=0.5$ (uni-modal, top) and $\alpha=1.5$ (bi-modal, bottom). b) Eigenvalues of $L_{loc}$ (o) and $L$ (${\bm{\times}}$)} for a typical zero mean case. The first eigenvalue $(= 0)$ is omitted. Notice that alternating eigenvalues are same for both operators. The arrows indicate the direction of motion of the other (`moving') eigenvalues of $L$ as $\sigma$ is reduced. The rightmost eigenvalue of $L$ reaches 0 at $\sigma=\sigma_c(\alpha)$ with non-zero speed. c) Non-zero mean solutions  d) The $\mu>0$ branch (solid) bifurcating from $\mu=0$ solution (dashed) via a supercritical pitchfork bifurcation as $\sigma$ occurs is reduced below $\sigma_c(\alpha)$. }
 \label{fig:un12}
 \end{figure*}

\paragraph*{\textbf{Numerical Results:}}
We use Chebfun \cite{driscoll2014chebfun} to perform all computations. The non-zero mean steady state solutions to Eq. (\ref{eq:kinetichom}) are computed using a simple fixed point iteration for $\mu$. The solutions are shown in Figure \ref{fig:un12}. The supercritical pitchfork bifurcation that occurs as $\sigma$ is reduced below critical value $\sigma_c(\alpha)$, is shown for a range of $\alpha$ values. To evaluate $h(\lambda)$ in Eq. (\ref{eq:eval3}), we compute the spectrum of $L_{loc}$ for $\mu=0$. The odd-numbered eigenfunctions are even functions of $x$, and hence $\langle \xi_{2k+1},g_1\rangle=0$. Therefore, eigenvalues $\lambda_{2k+1}$ of $L_{loc}$ are also eigenvalues of $L$, and the eigenvalues $\lambda_{2k}$ are moving eigenvalues. We find that at $\sigma=\sigma_c(\alpha)$, $h(0)=0$. Hence, as $\sigma$ is decreased below $\sigma_c(\alpha)$, the least stable eigenvalue $\lambda_2$ of $L_{loc}$ moves to the positive real axis due to the effect of the nonlocal term, resulting in instability of the zero mean solution.

\section{MFG Formulation}
In this section we describe a MFG formulation for homogeneous equation Eq. (\ref{eq:kinetichom}). The velocity of $i$th agent evolves via the following equation (compare with Eq (\ref{eq:unc_x}))
\begin{eqnarray}
dx_i(t)=a(x_i)dt+u_i(t)dt+\sigma d\omega_i(t),
\end{eqnarray}
where $u_i$ is the optimal control. Let $F[x_i,x_{-i}](t)\equiv (x_i-\dfrac{1}{N-1} \sum_{j\neq i}x_j)^2$, and $\beta=\dfrac{1}{r\sigma^2} $, where \kb{$r>0$} is the control \kb{cost or penalty}. Here $x_{-i}\equiv \{x_1,x_2,\dots,x_{i-1},x_{i+1},\dots\}$. Then the $i$th agent is minimizing the following long time average cost
\begin{eqnarray}
J=\limsup_{T\rightarrow\infty}\frac{1}{T}  \hspace{-0.1cm} \left[ \hspace{-0.05cm} \int_0^T\hspace{-0.25cm} \beta F[x_i,x_{-i}](t)+\frac{1}{2\sigma^2}u_i(t)^2 \right] dt\nonumber,
\end{eqnarray}
that depends on states of all other agents. 

To derive the MFG equations (recall Fig. \ref{fig:mfg}), we \kb{rewrite} the single-agent cost in terms of  $\hat F(x_i,t)$, the unknown coupling function with dependence on $x_i$ only
\begin{eqnarray}
J=\limsup_{T\rightarrow\infty}\frac{1}{T} \left[ \int_0^T \beta \hat F(x_i,t) +\frac{1}{2\sigma^2}u_i(t)^2 \right] dt.\nonumber
\end{eqnarray}

The resulting single agent HJB equation \cite{borkar2006ergodic} is 
\begin{flalign}
\partial_tv_i(x,t)=&c-\beta\hat F(x,t)-a(x)\partial_xv_i(x,t)\nonumber\\
+&\dfrac{\sigma^2}{2}(\partial_xv_i(x,t))^2 - \dfrac{\sigma^2}{2}\partial_{xx}v_i(x,t),
\end{flalign}
where $v_i(x,t)$ is the single-agent relative value function, $c$ is the minimum average cost, and $u_i(x,t)=-\sigma^2\partial_xv_i(x_i,t)$ given in feedback form. \PG{Note that the HJB equation is well-posed backward in time}. The self-consistency principle yields the expression for $\hat F$ in terms of agent density $f(x,t)$ (in the limit $N\rightarrow\infty$):
\begin{eqnarray}
\hat{F}[f](x,t)=\left[\int(x-y)f(y,t)dy\right]^2.
\end{eqnarray}

Hence, the following set of FP-HJB MFG equations govern the density and value function evolution:
\begin{align}
\partial_tf(x,t)+&\partial_x\left[\left(a(x)-\sigma^2\partial_xv(x,t)\right)f(x,t)\right]=\frac{\sigma^2}{2}\partial_{xx}f(x,t),\label{eq:fp1}\\
\partial_tv(x,t)&=c-\beta\hat{F}[f](x,t)-a(x)\partial_xv(x,t)\nonumber\\&+\frac{\sigma^2}{2}(\partial_xv(x,t))^2-\frac{\sigma^2}{2}\partial_{xx}v(x,t)\label{eq:hjb1}.
\end{align}

\kb{The unique invariant density satisfying Eq. (\ref{eq:fp1}) is}
\begin{eqnarray}
f_{\infty}(x)=\dfrac{1}{Z}\exp(-\frac{2}{\sigma^2} (U(x)+\sigma^2v_{\infty}(x))).\label{eq:finf}
\end{eqnarray}
Inserting this expression into Eq. (\ref{eq:hjb1}), and using the Cole-Hopf transformation \cite{todorov2009eigenfunction} $\phi(x)=\exp(-v_{\infty}(x))$, results in the following nonlinear nonlocal eigenvalue problem for $\phi(x)$
\begin{flalign}
c\phi(x)&=\beta\{x-\int y\exp(\frac{-2}{\sigma^2}U(y))\phi^2(y)dy\}^2\phi(x)\nonumber\\
&-a(x)\phi(x)-\frac{\sigma^2}{2}\partial_{xx}\phi(x),\label{eq:noneig}
\end{flalign}
with the constraint $\int \exp(\frac{-2}{\sigma^2}U(y))\phi^2(y) dy=1$ to ensure \kb{normalization of $f_{\infty}$}. The ground state of this problem yields the desired steady state solutions, with corresponding eigenvalue being the minimum cost $c$.
\vspace{-0.5cm}
\subsection{Stability Analysis}
In this section we extend the resolvent based analysis from section \ref{sec:stab_fp} to the MFG system, and find conditions for \emph{closed-loop} stability of an arbitrary steady state $(f_{\infty}(x),v_{\infty}(x))$ to an initial perturbation in density. We consider mass preserving perturbations in density of the form $f(x,t)=f_{\infty}(x)(1+\epsilon\tilde f(x,t))$, i.e., the initial conditions satisfy $\int f_{\infty}(x)\tilde f(x,0)dx=0$. The perturbed value function is taken to be of the form $v(x,t)=v_{\infty}(x)+\epsilon\tilde v(x,t)$. A given steady state is called linearly stable if any perturbation to the density decays to zero under the action of the control, where both the density and control evolution are computed using linearized MFG equations.

Linearization of MFG equations (\ref{eq:fp1},\ref{eq:hjb1}) yields the nonlocal system
\vspace{-0.25cm}
\begin{eqnarray}
\begin{bmatrix}
    \partial_t{\tilde{f}}(x,t)\\
    \partial_t{\tilde{v}}(x,t)
\end{bmatrix}
=
L^{FB}_{loc}
\begin{bmatrix}
    \tilde{f}(x,t)\\
    \tilde{v}(x,t)
\end{bmatrix},\label{eq:FB}
\end{eqnarray}
where
$L^{FB}=L^{FB}_{loc}+L^{FB}_{nonloc}$,
\begin{eqnarray}
L^{FB}_{loc}
=
\begin{bmatrix}
    L_{loc} &  2 L_{loc}  \\
   0  & -L_{loc}
     \end{bmatrix},\nonumber
L^{FB}_{nonloc}
=
\begin{bmatrix}
    0 & 0  \\
   2\beta s_1\langle g_1,.\rangle  & 0
     \end{bmatrix},\nonumber
\end{eqnarray}

$s_1(x)=x-\mu, g_1(x)=x$, $L_{loc}=-\partial_x(\hat U)\partial_x+\dfrac{\sigma^2}{2}\partial_{xx}$, with eigenvalue/eigenfunction pairs denoted by $\{\lambda_i,\xi_i\}$, and $\hat U(x)= U(x)+\sigma^2v^{\infty}(x)$ in analogy with the definition of $L_{loc}$ in Section \ref{sec:uncontrolled}. In addition to the Hilbert space $\mathbb{H}=L^2(\mathbb{R},f_{\infty}dx)$ and $R_{\lambda}$ as defined earlier, we also consider a subspace $\mathbb{\bar H}=\{f\in\mathbb{H}|\langle f,1\rangle=0\}$.
\subsubsection{Eigenspectrum of the linearized forward-backward operator}\label{subsec:spec}
We start off by noting that the characteristic equation of $L^{FB}_{loc}$ is $(L_{loc}-\lambda I)(L_{loc}+\lambda I)=0$. Hence, its eigenvalues are $\cup_{i\in\mathbb{N}}\{\pm\lambda_i\}$. Now consider the eigenvalue problem for $L^{FB}$ with eigenvalue $\lambda$ and eigenfunction $[w_f(x)\,\,w_v(x)]^T$:
\begin{eqnarray}
\lambda\begin{bmatrix}
    w_f\\
    w_v
\end{bmatrix}
=
\begin{bmatrix}
    L_{loc} w_f+2 L_{loc}w_v\label{eq:mfgeval1}\\
    2\beta s_1\langle g_1,w_f\rangle -L_{loc}w_v
\end{bmatrix}.
\end{eqnarray}

Assuming  $\lambda\not\in\cup_{i\in\mathbb{N}}\{\pm\lambda_i\}$, $R_{\lambda}$ and $R_{-\lambda}$ are well defined. The second equation of Eq. (\ref{eq:mfgeval1}) gives
\begin{eqnarray}
w_v=2\beta R_{-\lambda} s_1\langle g_1,w_f\rangle.\nonumber
\end{eqnarray}
Substituting this expression in the first equation of Eq. (\ref{eq:mfgeval1}), and re-arranging, 
\begin{eqnarray}
w_f=-4\beta R_{\lambda}L_{loc}R_{-\lambda} s_1\langle g_1,w_f\rangle.\label{eq:mfgeval1a}
\end{eqnarray}
Taking the inner product of the above equation with $g_1$,
\begin{eqnarray}
\langle g_1,w_f\rangle(1+4\beta\langle g_1, R_{\lambda}L_{loc}R_{-\lambda}s_1\rangle)=0.\nonumber
\end{eqnarray}

The eigenvalue equation for the $\langle g_1,w_f\rangle\neq0$ case for moving eigenvalues (as in Section \ref{sec:stab_fp}) is
\begin{eqnarray}
h({\lambda})\equiv 1+4\beta\langle g_1, R_{\lambda}L_{loc}R_{-\lambda}s_1\rangle=0.\label{eq:mfgeval2}
\end{eqnarray}
Using the definition of resolvent in Eq. (\ref{eq:mfgeval2}),
\begin{eqnarray}
h({\lambda})=1+4\beta\sum_{i=2}^{\infty}\lambda_i\dfrac{\langle g_1,\xi_i\rangle\langle s_1,\xi_i\rangle}{\lambda_i^2-\lambda^2},\label{eq:mfgeval3}
\end{eqnarray}
and hence,
\begin{eqnarray}
h({\lambda})=1+4\beta\sum_{i=2}^{\infty}\lambda_i\dfrac{\langle x,\xi_i\rangle^2}{\lambda_i^2-\lambda^2}.\label{eq:mfgeval4}
\end{eqnarray}
Since Eq. (\ref{eq:mfgeval4}) is Herglotz in $\lambda^2$, this implies that the eigenvalues come in pairs, either real or purely imaginary. \PG{Let $\omega\equiv h(0)=1+4\beta\sum_{i=2}^{\infty}\dfrac{\langle x,\xi_i\rangle^2}{\lambda_i}$.
\begin{lemma}\label{lem:mono} Consider the eigenvalue equation $h(\lambda)=0$ for moving eigenvalues.
\begin{itemize}
\item[(i)]  If $\langle x,\xi_i\rangle \neq 0$ for all $i\geq 2$, then there exists a pair of real roots $\pm\delta_i$ for each $i\geq 2$, such that $\lambda_{i+1}<\delta_i<\lambda_i$.
\item[(ii)] Recall that $\lambda_1=0$. If $\langle x,\xi_2\rangle \neq 0$ and $\omega>0$, there exists a pair of real roots $\pm\delta_1$, such that $\lambda_2<\delta_1<0$.
\item[(iii)]If $\langle x,\xi_2\rangle \neq 0$ and $\omega<0$, there exists a pair of purely imaginary roots $\pm i\gamma$.
\end{itemize}
\end{lemma}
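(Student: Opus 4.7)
The plan is to reduce the problem to a one-variable scalar analysis via the substitution $z := \lambda^2$. Define
\begin{equation*}
\tilde h(z) \;=\; 1 + 4\beta \sum_{i=2}^{\infty} \frac{\lambda_i \langle x,\xi_i\rangle^2}{\lambda_i^2 - z},
\end{equation*}
so that $h(\lambda) = \tilde h(\lambda^2)$. Any real root $z^{*}>0$ of $\tilde h$ then yields a pair of real roots $\pm\sqrt{z^{*}}$ of $h$, while a real root $z^{*}<0$ yields a purely imaginary pair $\pm i\sqrt{-z^{*}}$, automatically producing the ``pair'' structure in all three cases. It suffices to locate the real zeros of $\tilde h$ on the components $(-\infty,\lambda_2^2)$ and $(\lambda_i^2,\lambda_{i+1}^2)$ for $i \ge 2$.

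Next, I would exploit the sign structure to establish monotonicity and boundary behavior. Termwise differentiation yields
\begin{equation*}
\tilde h'(z) \;=\; 4\beta \sum_{i=2}^{\infty} \frac{\lambda_i \langle x,\xi_i\rangle^2}{(\lambda_i^2 - z)^2},
\end{equation*}
which is strictly negative on each component of the domain, since $\lambda_i<0$ for $i\ge 2$ and $\beta>0$. Thus $\tilde h$ is strictly decreasing on each component. At every pole $z=\lambda_k^2$ for which $\langle x,\xi_k\rangle \neq 0$, the residue $4\beta\lambda_k\langle x,\xi_k\rangle^2$ is negative, so $\tilde h(z)\to -\infty$ as $z\to (\lambda_k^2)^{-}$ and $\tilde h(z)\to +\infty$ as $z\to (\lambda_k^2)^{+}$. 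Finally, $\tilde h(0)=\omega$ by definition, and $\tilde h(z)\to 1$ as $z\to -\infty$ by termwise passage to the limit.

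The three statements then follow from the intermediate value theorem applied componentwise. Under the hypothesis of (i), every pole is genuine, and on each $(\lambda_i^2,\lambda_{i+1}^2)$ the function $\tilde h$ is continuous, strictly decreasing, and ranges from $+\infty$ to $-\infty$; hence it has a unique zero $z_i^{*}\in(\lambda_i^2,\lambda_{i+1}^2)$, and setting $\delta_i=-\sqrt{z_i^{*}}$ gives $\lambda_{i+1}<\delta_i<\lambda_i$. For (ii) and (iii), the interval $(-\infty,\lambda_2^2)$ is where the action happens: $\tilde h$ strictly decreases from $1$ at $-\infty$ to $-\infty$ at $(\lambda_2^2)^{-}$, passing through $\omega$ at $z=0$. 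If $\omega>0$, the unique zero lies in $(0,\lambda_2^2)$, producing real roots $\pm\delta_1$ with $\lambda_2<\delta_1<0$; if $\omega<0$, it lies in $(-\infty,0)$, producing a purely imaginary pair $\pm i\gamma$.

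The only non-routine part is the analytical justification of termwise differentiation and the $z\to -\infty$ limit. I would handle this using Bessel's inequality $\sum_i \langle x,\xi_i\rangle^2 \le \|x\|_{\mathbb H}^2 = \int x^2 f_\infty(x)\,dx<\infty$ (finite because $f_\infty$ decays super-exponentially) together with $|\lambda_i|\to\infty$. These yield the bound $\bigl|\lambda_i\langle x,\xi_i\rangle^2/(\lambda_i^2-z)^k\bigr| \le C\,\langle x,\xi_i\rangle^2/|\lambda_i|$ uniformly on compact subsets of the resolvent set for $k=1,2$, so the defining series for $\tilde h$ and $\tilde h'$ converge uniformly there, legitimizing all of the steps above.
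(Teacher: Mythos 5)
Your proof is correct and follows essentially the same route as the paper's: a componentwise monotonicity argument plus the intermediate value theorem, applied between consecutive poles and on the interval containing $0$, with the substitution $z=\lambda^2$ merely packaging the real-axis intervals $(\lambda_{i+1},\lambda_i)$ and the imaginary axis into a single real variable. As a bonus you supply the termwise-differentiation justification the paper leaves implicit, and you correctly identify the limit $\tilde h(z)\to 1$ as $z\to-\infty$ (the paper's part (iii) writes $h(i\gamma)\to 0$ as $\gamma\to\infty$, which should be $1$; the IVT argument goes through either way since the limit is positive).
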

\begin{proof}
\begin{itemize}
\item[(i)]  Consider the interval $I_i=(\lambda_{i+1},\lambda_i)$. As $\lambda\rightarrow\lambda_i^-$, $h(\lambda)\rightarrow \infty$, and as $\lambda\rightarrow\lambda_{i+1}^+$, $h(\lambda)\rightarrow -\infty$. It is easy to check that $h(\lambda)$ is monotonic in $I_i$. By intermediate value theorem, a root $\delta_i$ exists in $I_i$, and by the monotonicity property, it is unique. The result for $-\delta_i$ follows by symmetry.
\item[(ii)] Consider the interval $I_1=(\lambda_2,0)$. Note that as $\lambda\rightarrow\lambda_{2}^+$, $h(\lambda)\rightarrow -\infty$, and as $\lambda\rightarrow 0^-$, $h(\lambda)\rightarrow \omega$. Hence, if $w>0$, arguments similar to those in part (i) yield the existence of a real root $\delta_1$ between $\lambda_2$ and $0$.
\item[(iii)] Consider the function $h(i\gamma)$ for real $\gamma>0$. Clearly, $h$ is monotonic in this interval. Furthermore, as $\gamma\rightarrow \infty$, $h(i\gamma)\rightarrow 0$, and as $\gamma\rightarrow 0^+$, $h(i\gamma)\rightarrow \omega$. By arguments similar to those in part (i), $\omega<0$ implies that there is a unique root $i\gamma$ of $h$.
\end{itemize}
\end{proof}}
\vspace{-0.3cm}
\subsubsection{Contraction analysis of the linearized forward-backward operator}
Since the MFG system has a forward-backward nature, spectral information alone is insufficient to derive conclusions about the stability of steady state solutions. \PG{A contraction analysis is therefore adopted following Refs. \onlinecite{yin2012synchronization,huang2007large}. Consider the linear dynamical system given by Eq. (\ref{eq:FB}), with initial perturbation in density $f(x,0)=f_{\infty}(1+\epsilon\tilde f(x,0))$. Assuming that $\tilde v(x,T)\rightarrow 0 $ as $T\rightarrow \infty$, the conditions for existence of a unique solution satisfying this assumption are derived. These conditions also provide a stability criterion.}
Integrating the $\tilde{v}$ equation in  Eq. (\ref{eq:FB}) from $t$ to $T$, 
\vspace{-0.375cm}
\begin{flalign*}
\tilde{v}(x,T)&=e^{-L_{loc}(T-t)}\tilde{v}(x,t)\\&+2\beta\int_t^Te^{-L_{loc}(T-s)}s_1(x)\langle g_1,\tilde{f}(.,s)\rangle ds.
\end{flalign*}
Taking the limit $T\rightarrow\infty$,

\begin{flalign}
\tilde{v}(x,t)=-2\beta e^{-L_{loc}t} \hspace{-0.15cm} \int_t^{\infty} \hspace{-0.25cm}  e^{L_{loc}s}s_1(x)\langle g_1(.),\tilde{f}(.,s)\rangle ds.\label{eq:vper}
\end{flalign}

\noindent \kb{Substituting above equation in the $\tilde{f}$ equation},
\begin{flalign}
\partial_t\tilde{f}(x,t)&=L_{loc}\tilde{f}(x,t)\nonumber\\&-4\beta L_{loc}e^{-L_{loc}t}\int_t^{\infty}e^{L_{loc}s}s_1(x)\langle g_1(.),\tilde{f}(.,s)\rangle ds.
\end{flalign}
\kb{Integrating from $0$ to $t$ yields the fixed point equation},
\begin{flalign}
\tilde{f}(x,t)=e^{L_{loc}t}\tilde{f}(x,0)+M\tilde{f}(x,t),\label{eq:fp_M}
\end{flalign}
where the operator $M$ acting on $\tilde{f}(x,t)$ is defined as
\begin{widetext}
\begin{flalign}
M\tilde{f}(x,t)=-4\beta e^{L_{loc}t}\int_{r=0}^{r=t} e^{-L_{loc}r}L_{loc}e^{-L_{loc}r}\int_{s=r}^{s=\infty}e^{L_{loc}s}s_1(x)\langle g_1(.),\tilde{f}(.,s)\rangle dsdr. \label{eq:Mtime}
\end{flalign}
\end{widetext}
Applying the Laplace transform in time to Eq. (\ref{eq:Mtime}),
\begin{equation}
\hat{M}(\lambda)=-4\beta R_{\lambda}L_{loc}R_{-\lambda}s_1\langle g_1,.\rangle. \label{eq:Mfreq}
\end{equation}

\begin{figure*}
 	\subfloat[]{\includegraphics[width=2.35in,height=2in]{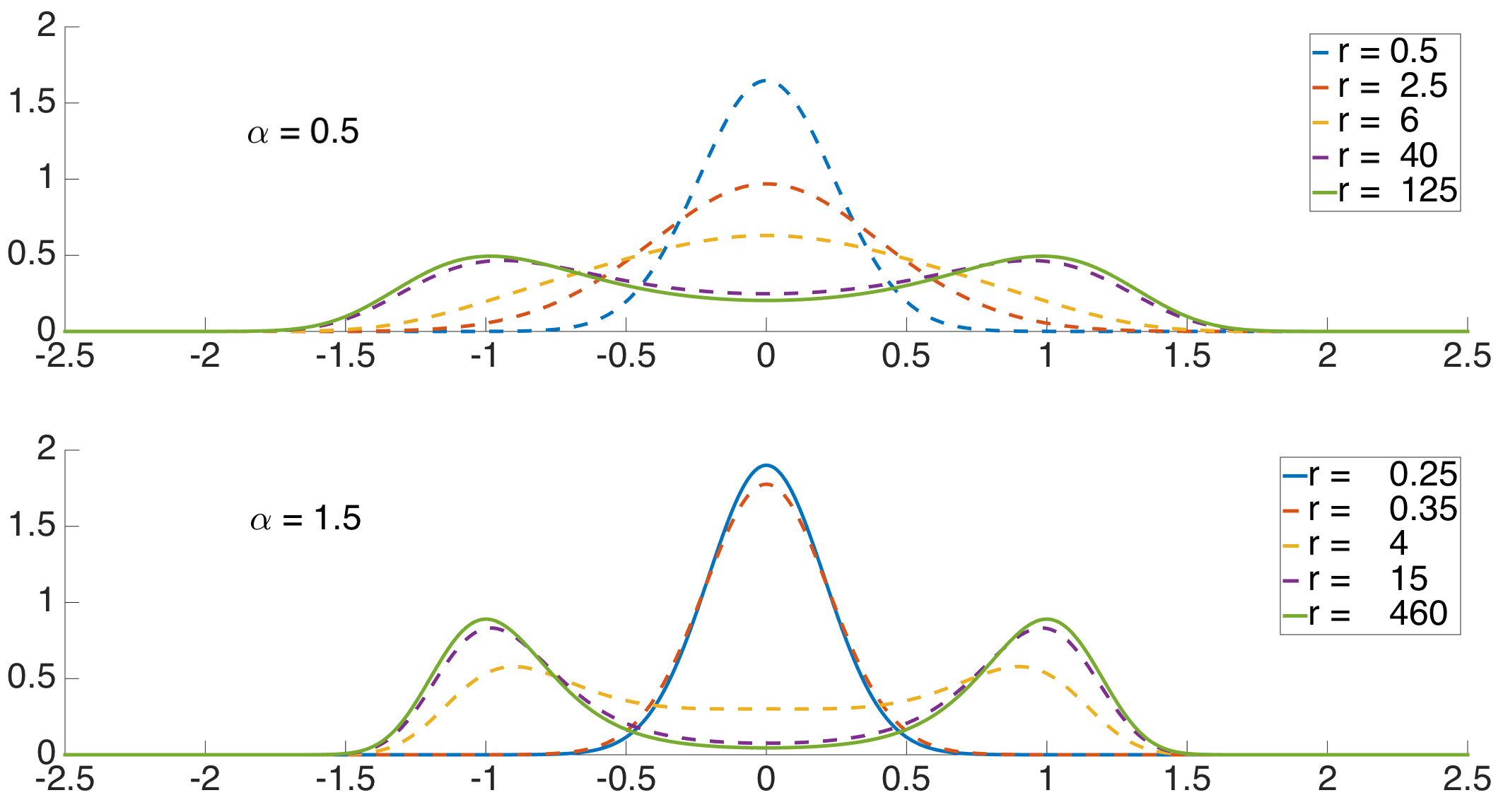}	}
  \subfloat[]{\includegraphics[width=2.35in]{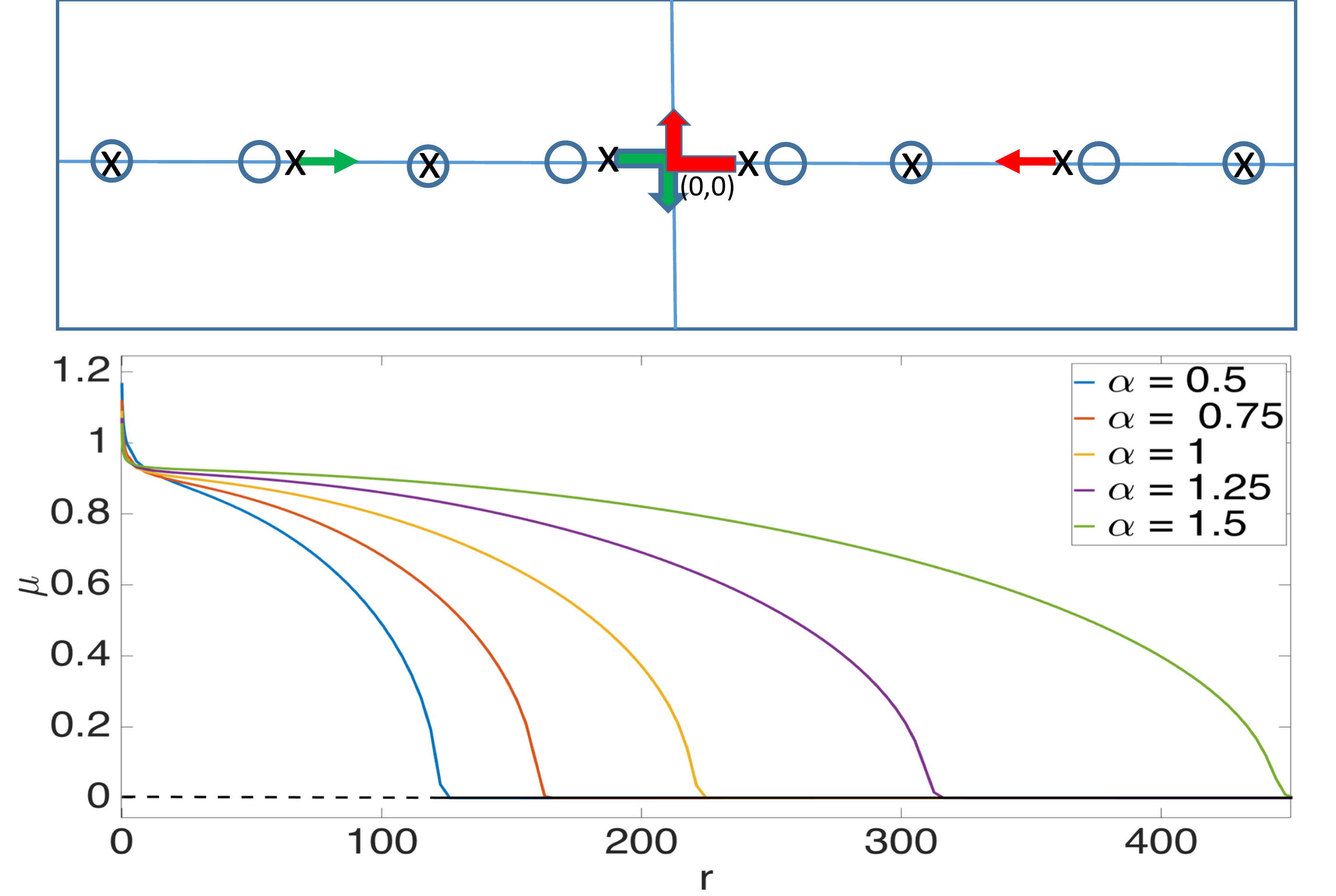}	}
 	\subfloat[]{\includegraphics[width=2.35in,height=2in]{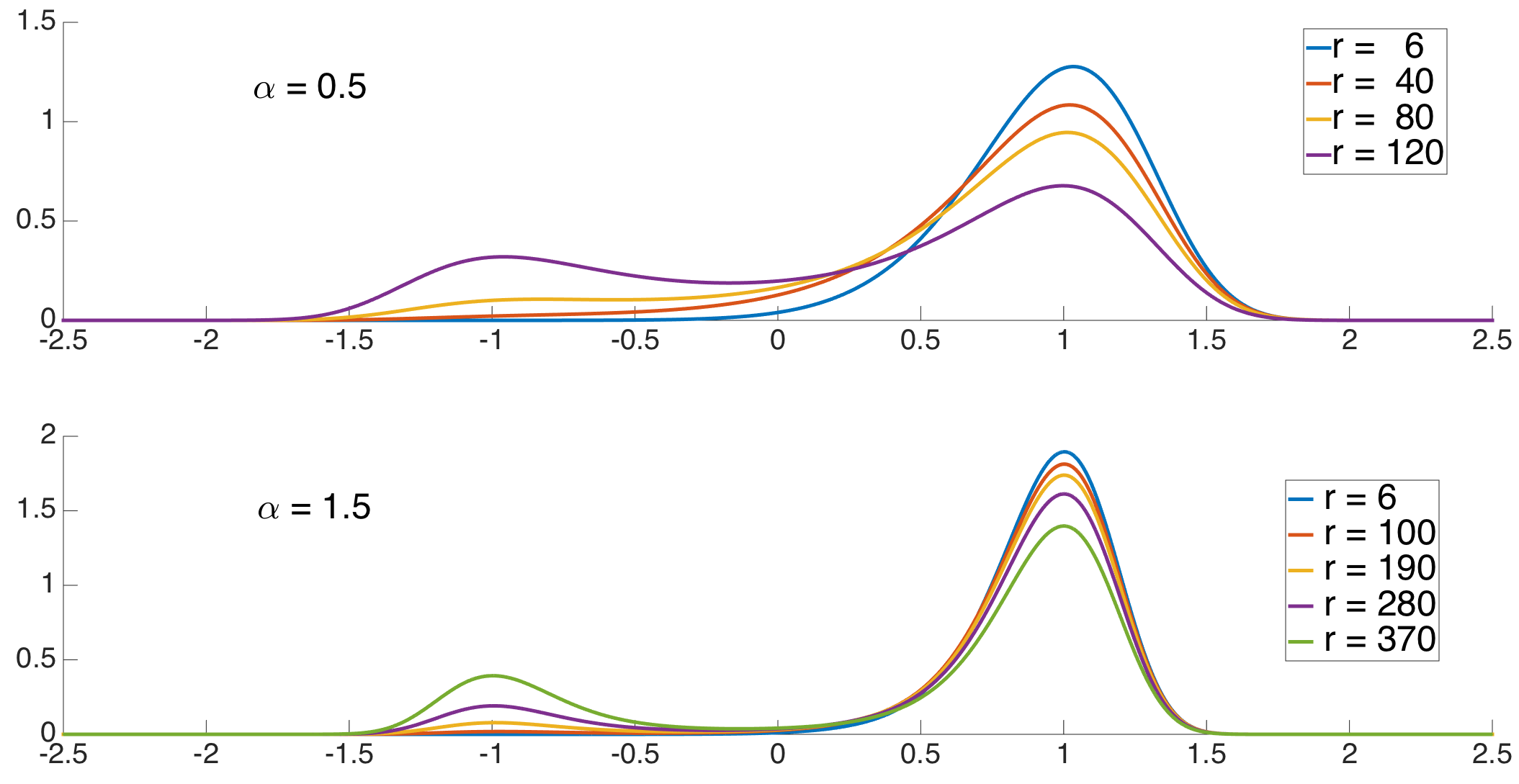}	}
\caption{\footnotesize {The MFG system with $\sigma=0.5$. a) Zero mean MFG steady state densities for $\alpha=0.5$ (top) and $\alpha=1.5$ (bottom) for various control penalty values. b) (Top) Eigenvalues of $L^{FB}_{loc}$ (o) and $L^{FB}$ (${\bm{\times}}$)} for a typical zero mean case. The twin zero eigenvalues of $L^{FB}$ are omitted. The arrows indicate the direction of motion of the `moving' eigenvalues of $L^{FB}$ as $r$ is reduced starting from $r>r_{sup}(\alpha,\sigma)$. \PG{Note that the assumption in Lemma \ref{lem:mono}(i) is violated in this particular case due to the symmetric nature of the self-propulsion term, and hence, only alternating eigenvalues are actually `moving'.} The pair of eigenvalues of $L^{FB}$ closest to imaginary axis, $\pm\delta_1$, reaches $0$ at $r=r_{sup}$, and moves up/down the imaginary axis for $r<r_{sup}$. (Bottom) The $\mu>0$ branch (solid) bifurcating from $\mu=0$ solution (dashed) via a supercritical pitchfork bifurcation as $r$ is reduced below $r_{sup}$. c) Non-zero mean MFG steady state densities on the supercritical branch. }
 \label{fig:mfgfp1}
 \end{figure*}

The operator norm $\|M\|$ is given by 
\begin{flalign}
\|M\|=&\sup_{\lambda\in \mathbb{I}}\sup_{\|\tilde {f}\|=1}\|\hat{M}(\lambda)\tilde{f}\|\label{eq:Mnorm},\\
=&4\beta \sup_{\lambda\in \mathbb{I}}\sup_{\|\tilde f\|=1} \| \sum_{i=2}^{\infty}\dfrac{\lambda_i\langle s_1,\xi_i\rangle\langle g_1,\tilde f\rangle}{\lambda_i^2-\lambda^2}\xi_i\|,\nonumber
\end{flalign}
\begin{flalign}
=&4\beta \sup_{\lambda\in \mathbb{I}}\sup_{\|\tilde f\|=1} \sqrt{\sum_{i=2}^{\infty}\left[\dfrac{\lambda_i\langle s_1,\xi_i\rangle\langle g_1,\tilde f\rangle}{\lambda_i^2-\lambda^2}\right]^2},\nonumber\\
=&4\beta \|g_1\| \sqrt{\sum_{i=2}^{\infty}\dfrac{\langle s_1,\xi_i\rangle^2}{\lambda_i^2}}=4\beta \|x\| \sqrt{\sum_{i=2}^{\infty}\dfrac{\langle x,\xi_i\rangle^2}{\lambda_i^2}}.
\end{flalign}
Lemma \ref{lem:contract} proved next implies that $\|M\|<1$ is a sufficient condition for a steady state $(f_{\infty}(x),v_{\infty}(x))$ of the nonlinear MFG system Eqs. (\ref{eq:fp1},\ref{eq:hjb1}) to be linearly stable to density perturbations.

\begin{lemma}\label{lem:contract} Consider the initial value problem for the linearized system in Eqs. \ref{eq:FB}, with mass-preserving initial condition $\tilde f(x,0)$ i.e., $\int f_{\infty}(x)\tilde f(x,0)dx=0$. If the operator $M$ is a contraction (i.e., $\|M\|<1$), then the perturbation in density, $\tilde f(.,t)$, decays to $0$ as $t\rightarrow \infty$. Moreover, $\tilde v(.,t)$ also decays to $0$ as $t\rightarrow\infty$.
\end{lemma}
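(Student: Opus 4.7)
My plan is to interpret Eq.~(\ref{eq:fp_M}) as a fixed-point equation in a Banach space of $\bar{\mathbb{H}}$-valued time-dependent functions, use the hypothesis $\|M\|<1$ to produce a unique solution via Banach's contraction principle, read off decay of $\tilde f$ from that solution, and finally recover decay of $\tilde v$ by substituting into (\ref{eq:vper}). A preliminary check is that mass-preservation is actually propagated by the linearized FB system, i.e.\ that $\bar{\mathbb{H}}$ is invariant: this follows because $L_{loc}$ is self-adjoint with $L_{loc}1=0$ (so $\langle 1,L_{loc}\cdot\rangle=0$) and $\langle 1,s_1\rangle=\int(x-\mu)f_\infty dx=0$ by the definition of $\mu$, so differentiating $\langle 1,\tilde f(\cdot,t)\rangle$ along (\ref{eq:FB}) gives zero.

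The natural space is $Y=L^2([0,\infty);\bar{\mathbb{H}})$, because by Parseval applied to the Laplace transform identity (\ref{eq:Mfreq}), the induced operator norm satisfies $\|M\|_{Y\to Y}=\sup_{\lambda\in i\mathbb{R}}\|\hat M(\lambda)\|$, which is precisely the quantity computed in (\ref{eq:Mnorm}). Restricted to $\bar{\mathbb{H}}$, the self-adjoint $L_{loc}$ has spectral gap $|\lambda_2|>0$, so $\|e^{L_{loc}t}\tilde f(\cdot,0)\|\leq e^{\lambda_2 t}\|\tilde f(\cdot,0)\|$ and the forcing term lies in $Y$. The contraction hypothesis then yields, via the Neumann expansion $\tilde f=\sum_{k\geq 0}M^k\bigl(e^{L_{loc}t}\tilde f(\cdot,0)\bigr)$, a unique $\tilde f$ with $\|\tilde f(\cdot,t)\|\in L^2_t$, hence in particular $\|\tilde f(\cdot,t)\|\to 0$ along a subsequence of times.

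The main obstacle is upgrading this square-integrability to genuine pointwise decay $\|\tilde f(\cdot,t)\|\to 0$. My strategy is to rerun the contraction in the exponentially weighted space $Y_\eta=\{g:e^{\eta t}g\in Y\}$ for small $\eta>0$. The key observation is that $\hat M(\lambda)$ is meromorphic in $\lambda^2$ with poles only at the real values $\pm\lambda_i$ ($i\geq 2$), hence analytic in a thin strip around the imaginary axis; continuity of $\|\hat M(\lambda)\|$ on that strip, together with the strict inequality $\sup_{\lambda\in i\mathbb{R}}\|\hat M(\lambda)\|<1$, propagates the bound slightly into the right half-plane and, by Parseval again, gives $\|M\|_{Y_\eta\to Y_\eta}<1$ for $\eta$ small enough. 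The resulting fixed-point solution then obeys $\|\tilde f(\cdot,t)\|\lesssim e^{-\eta t}$ pointwise, once one invokes the continuity of $t\mapsto\tilde f(\cdot,t)$ that comes automatically from the evolution equation.

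Decay of $\tilde v$ is then a direct consequence of (\ref{eq:vper}): since $s_1=x-\mu\in\bar{\mathbb{H}}$, the semigroup bound $\|e^{L_{loc}(s-t)}s_1\|\leq e^{\lambda_2(s-t)}\|s_1\|$ holds for $s\geq t$, and Cauchy--Schwarz gives $|\langle g_1,\tilde f(\cdot,s)\rangle|\leq\|g_1\|\|\tilde f(\cdot,s)\|$, so
\[
\|\tilde v(\cdot,t)\|\leq 2\beta\|s_1\|\|g_1\|\int_t^\infty e^{\lambda_2(s-t)}\|\tilde f(\cdot,s)\|\,ds\longrightarrow 0
\]
as $t\to\infty$, since the integrand is dominated by an exponentially decaying tail.
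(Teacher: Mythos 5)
Your proposal follows the same skeleton as the paper's proof --- invert the fixed-point equation (\ref{eq:fp_M}) by a Neumann series under the hypothesis $\|M\|<1$, use the spectral gap of $L_{loc}$ on $\mathbb{\bar{H}}$ to control the inhomogeneous term $e^{L_{loc}t}\tilde f(x,0)$, and recover $\tilde v$ by substituting back into (\ref{eq:vper}) --- but you supply substantially more of the functional-analytic scaffolding than the paper does. In particular, you identify the Banach space in which ``$M$ is a contraction'' is actually the statement being used: $Y=L^2([0,\infty);\mathbb{\bar{H}})$, where by Parseval the induced norm of $M$ equals $\sup_{\lambda\in i\mathbb{R}}\|\hat M(\lambda)\|$, which is exactly the quantity (\ref{eq:Mnorm})--(\ref{eq:Mfreq}) that the paper computes. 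This closes a real gap in the published argument: the paper shows only that the forcing term $e^{L_{loc}t}\tilde f(x,0)$ decays and then asserts decay of $(\mathbf{I}-M)^{-1}$ applied to it, without explaining why the resolvent of $M$ maps decaying functions to decaying functions. Your weighted-space argument (analyticity of $\hat M(\lambda)$ in the strip $|\mathrm{Re}\,\lambda|<|\lambda_2|$, plus $\|\hat M(\lambda)\|\to 0$ as $|\mathrm{Im}\,\lambda|\to\infty$, so the strict bound persists on a shifted contour) is the right way to extract genuine exponential decay rather than mere square-integrability, and your treatment of $\tilde v$ via the semigroup bound on $s_1\in\mathbb{\bar{H}}$ and Cauchy--Schwarz is correct (indeed, $L^2_t$ decay of $\tilde f$ already suffices there).

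One soft spot remains in your upgrade to pointwise decay: knowing $e^{\eta t}\|\tilde f(\cdot,t)\|\in L^2_t$ together with bare continuity of $t\mapsto\tilde f(\cdot,t)$ does not by itself rule out increasingly tall, narrow excursions. The clean fix is to feed the $Y_\eta$ solution back into the right-hand side of (\ref{eq:fp_M}) once more: bounding the inner $s$-integral in (\ref{eq:Mtime}) by Cauchy--Schwarz against the weighted $L^2$ norm of $\tilde f$, and using that the composite exponent $\lambda_i(t+s-2r)$ is nonpositive on the integration domain, yields a direct pointwise-in-$t$ estimate $\|M\tilde f(\cdot,t)\|\leq C e^{-\eta' t}$ for some $\eta'>0$. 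With that replacement your argument is complete, and is strictly more careful than the one the paper gives.
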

\begin{proof}
If $M$ is a contraction, then we can (formally) invert the Eq. (\ref{eq:fp_M}), and write the unique solution
\begin{eqnarray}
\tilde{f}(x,t)=(\mathbf{I}-M)^{-1}e^{L_{loc}t}\tilde{f}(x,0)\nonumber\\=(\mathbf{I}+M+M^2+\dots)e^{L_{loc}t}\tilde{f}(x,0).
\end{eqnarray}
We note that mass conservation property is equivalent to $\langle \tilde f(x,0),1\rangle=0$, i.e. $\tilde f(x,0) \in \mathbb{\bar{H}}$.  Recall that $L_{loc}$ restricted to $\mathbb{\bar{H}}$ is a self-adjoint operator with negative eigenvalues $\lambda_i, i=2,3,\dots$. Then, $\lim_{t\rightarrow\infty}\|e^{L_{loc}t}\|_{\mathbb{\bar{H}}}=\lim_{t\rightarrow\infty}e^{\lambda_{2}t}=0$. This proves the decay of $\tilde f(.,t)$. \PG{The corresponding result for $\tilde v(.,t)$ is obtained by inserting the expression for $\tilde f(.,t)$ into Eq. (\ref{eq:vper}).}
\end{proof}

Now consider a case where eigenvalue equation in Eq. (\ref{eq:mfgeval1a}) has a pair of  purely imaginary roots $\pm i\gamma (\neq 0)$. Then there is a eigenfunction $z_f$ s.t.
\begin{eqnarray*}
z_f=-4\beta R_{i\gamma}L_{loc}R_{-i\gamma}s_1\langle g_1,z_f\rangle\\
=\hat{M}(i\gamma)z_f,
\end{eqnarray*}
by noting Eq. (\ref{eq:Mfreq}). But this implies that norm of $\hat{M}$ is at least $1$, hence it is not a contraction. This implies that a necessary condition for $M$ to be a contraction is the absence of non-zero spectra of $L^{FB}$ on the imaginary axis.

 \subsection{Numerical Results}
 \begin{figure*}[t!]
 \centering
	 	\subfloat[]{\includegraphics[width=2.35in,height=1.25in]{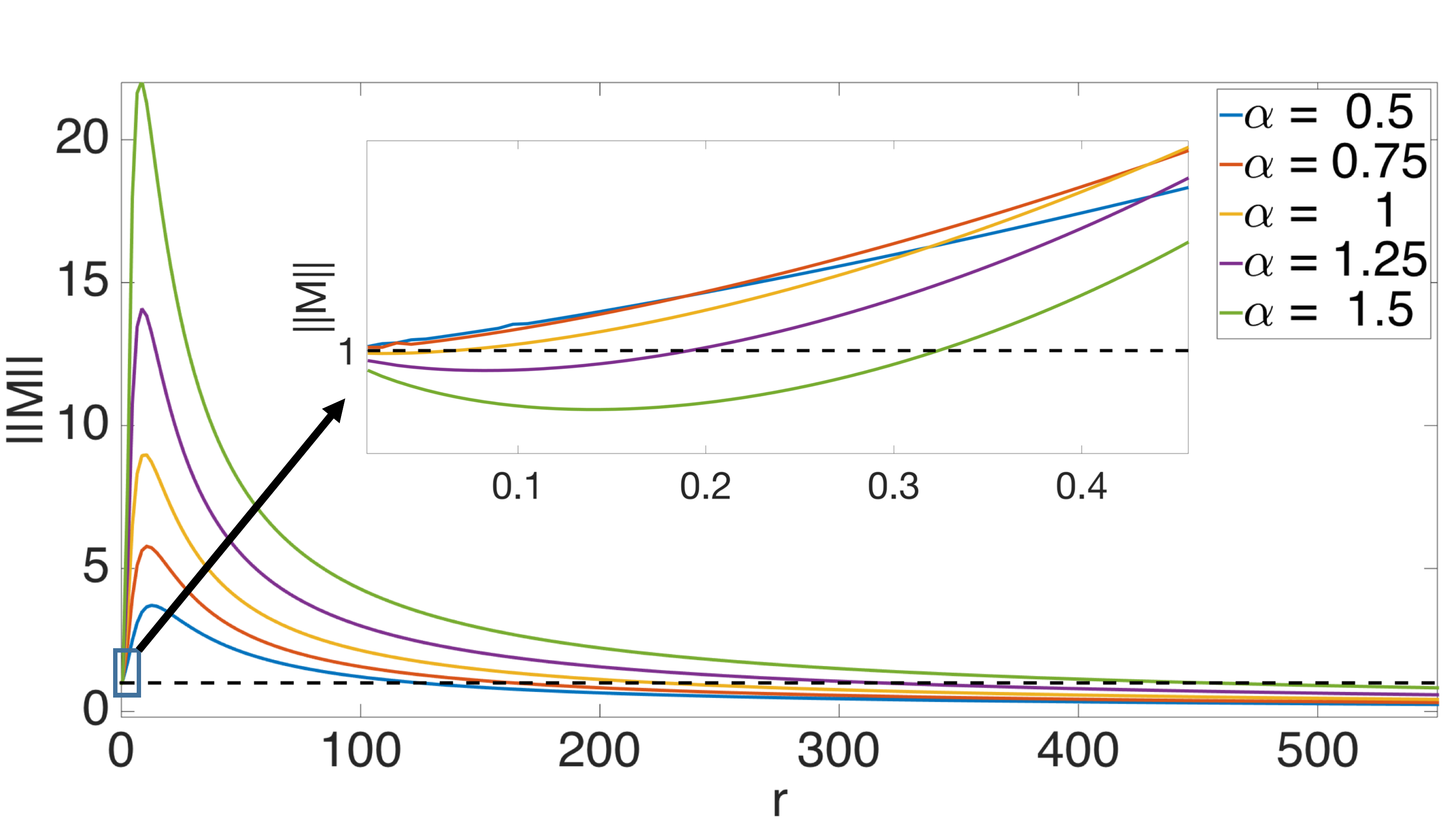}}
	 	\subfloat[]{\includegraphics[width=2.35in,height=1.25in]{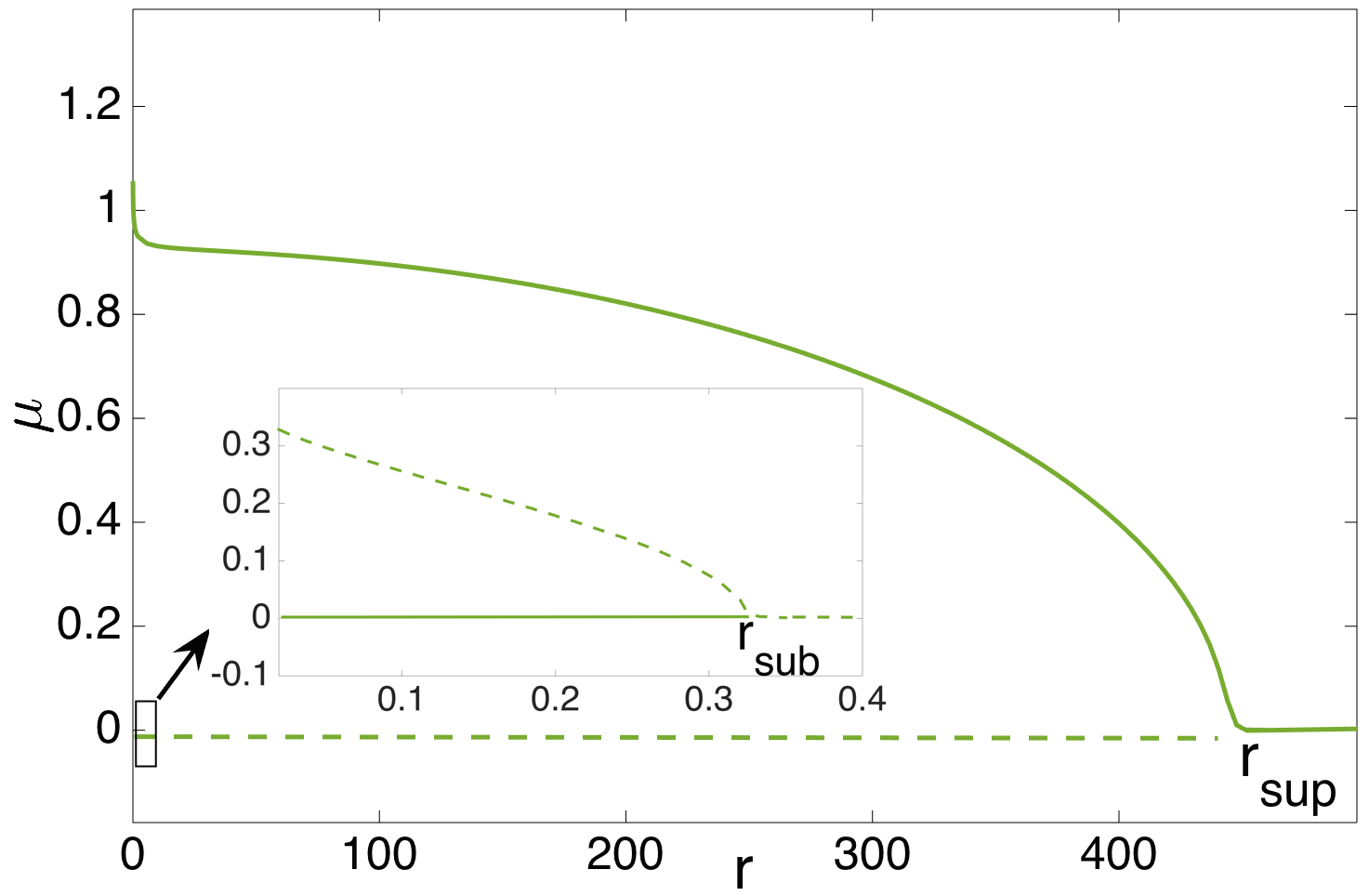}}
	 	\subfloat[]{\includegraphics[width=2.35in, height=1in]{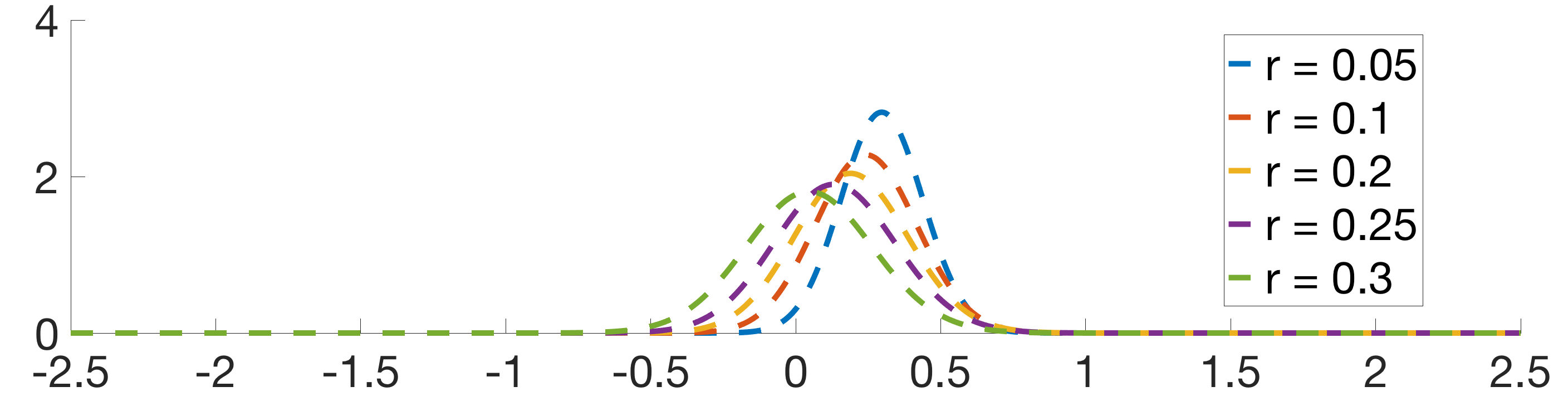}}
		\caption{\footnotesize{The MFG system with $\sigma=0.5$. a) The norm of operator $M$ for zero mean steady state as control penalty is varied, for various $\alpha$. b). The bifurcation diagram for $\alpha=1.5$, showing supercritical and subcritical (inset) bifurcations . Only the $\mu>0$ non-zero mean branches are shown. c) Non-zero mean MFG steady state densities on the subcritical branch for $\alpha=1.5$. }}
 \label{fig:mfgfp2}
 \end{figure*}
 \PG{Recall that in the MFG problem described by Eqs. (\ref{eq:fp1},\ref{eq:hjb1}), the representative agent is minimizing a weighted sum of two costs: one penalizes deviation of its velocity from the mean velocity of the agent population, and the other penalizes the control action. In this section, we compute fixed points, and identify phase transitions of this system of equations as the problem parameters are varied.} Rather than solving the resulting constrained nonlinear eigenvalue problem \ref{eq:noneig} directly, we use an iterative algorithm to compute steady state solutions of the MFG system.

We note that the coupling term $\hat{F}[f](x,t)$ evaluated at any steady state density $f_{\infty}$ is  $\hat{F}[f_{\infty}](x)=(x-\mu)^2$, where $\mu=\int yf_{\infty}(y)dy$. Again using Cole-Hopf transformation $\phi(x)=\exp(-v(x))$ on the HJB equation leads to a \textbf{linear} eigenvalue problem in $\phi$:
\begin{eqnarray}
c\phi(x)=\mathcal{L}[\phi](x),\label{eq:lineig}
\end{eqnarray}
where $\mathcal{L}[\phi]=\beta(x-\mu)^2\phi(x)-a(x)\partial_x\phi(x)-\frac{\sigma^2}{2}\partial_{xx}\phi(x)$. 
We solve Eq. (\ref{eq:lineig})  iteratively along with Eq. (\ref{eq:finf}) to find zero mean MFG steady states ($f_{\infty},v_{\infty})$ for a range of $r$, keeping $\sigma$ and $\alpha$ fixed (See Fig. \ref{fig:mfgfp1}).These solutions are stable (i.e., $\|M\|<1$) for large $r$, \PG{implying that when control is expensive, the agents use minimal control action. The resulting steady state distribution is bi-modal due to dominance of the self-propulsion force, and dispersion via noise}.  

These zero mean solutions lose stability (i.e., $\|M\|>1$) via a supercritical bifurcation as $r$ is reduced below a critical value $r_{sup}$. The Eq. (\ref{eq:mfgeval4}) for moving eigenvalues of $L^{FB}$ has a double zero root at $r=r_{sup}$, and a pair of purely imaginary roots emerges as $r$ is reduced below $r_{sup}$. This implies that the pair of symmetric eigenvalues of $L^{FB}$ closest to the imaginary axis reaches $0$ at the critical parameter, and then moves up/down the imaginary axis. The stable non-zero mean MFG steady state solutions on the supercritical branch are computed by combining fixed point iteration in $\mu$ with a continuation step.  \PG{This bifurcation provides a MFG interpretation to the pitchfork bifurcation observed in the uncontrolled system, i.e., cheaper control makes it economical to compensate for noise. Hence, the agents apply larger control action to flock together (and reduce the cost of deviation from the population mean), resulting in symmetry breaking non-zero mean solutions.}

When noise strength $\sigma$ is fixed below a critical value, the zero mean solution branch undergoes a \emph{subcritical} bifurcation as control penalty $r$ is further reduced, i.e, at $r=r_{sub}< r_{sup}$ (See Fig. \ref{fig:mfgfp2}). The corresponding non-zero mean solutions were computed using bisection method. This bifurcation is not seen in the uncontrolled system. For instance, when ($\sigma=0.5,  \alpha=1.5$), it results in creation of uni-modal stable zero mean solutions in the case of cheap control, $r<r_{sub}$, as compared to the bi-modal stable zero mean solution that exist for expensive control, $r>r_{sup}$. Hence, we conclude that for $r<r_{sub}$, the control is cheap enough to counteract the intrinsic dynamics, and make zero mean uni-modal solution stable.
\vspace{-0.3cm}
\section{Conclusions}
We have presented a MFG formulation for homogeneous flocking of agents with gradient nonlinearity in their intrinsic dynamics. We have employed tools from theory of reaction-diffusion equations, and exploited the low rank nature of the nonlocal coupling term to study the linear stability of the MFG equations. The explicit formulae for verifying the stability of steady state solutions of the nonlocal forward-backward MFG system require relatively simple numerical computation of spectra of the local self-adjoint Fokker-Planck operators. The MFG system shows rich nonlinear behavior, such as supercritical and subcritical pitchfork bifurcations that result in wide range of collective behaviors, some of which are not present in the uncontrolled model. 

Much of the analysis in the current work can be generalized to higher dimensional state space for homogeneous flocking with self-propulsion, similar in spirit to the generalization\cite{barbaro2016phase} of one-dimensional \emph{uncontrolled} flocking model. Furthermore, the abstract results presented in this work apply to models other than homogeneous flocking, e.g. nonlocally coupled agents with arbitrary first order gradient dynamics. Extension to non-homogeneous flocking would be a natural next step; the resulting second-order dynamics could require more sophisticated tools \cite{alexander1990topological} for stability analysis. \PG{Implementation of the MFG control laws in an engineered large population system requires the control to be provided in a causal form. Algorithms that can learn the MFG laws can be used to convert the control laws obtained by solving the FP-HJB equations into an implementable form \cite{cardaliaguet2017learning}.}

The use of bifurcation and singularity theory to develop bio-inspired control and decision making algorithms for multi-agent systems has been explored recently\cite{leonard2014multi, srivastava2017bio,gray2015honeybee}. Our work adds to the toolbox for systematic analysis of collective behavior of non-cooperative dynamic agents via an inverse modeling approach. \PG{The qualitative and quantitative insight provided by the stability analysis can be exploited in \emph{mechanism design}, i.e., design of penalties or incentives to drive the population to asymptotic states with desirable characteristics. We believe that a systematic study of bifurcations in MFG models can lead to progress in tackling the grand challenge of designing or manipulating collective behavior of a large population of non-cooperative dynamic agents.}
\begin{acknowledgments}
We wish to thank the anonymous reviewers for their constructive comments, especially regarding Lemma \ref{lem:mono} and its proof.
\end{acknowledgments}
\end{document}